\newtheorem{theorem}{Theorem}
\newtheorem{proposition}{Proposition}   
\newtheorem{definition}{Definition} 
\newtheorem{remark}{Remark} 
\newcommand{\mc}{\mathcal} 
\newcommand{\mkv}{-\!\!\!\!\minuso\!\!\!\!-}
\newcommand{\mw}[1]{{\color{black}#1}}
\newcommand{\todo}[1]{{\color{black}#1}}
\newcommand{\pe}[1]{{\color{black}#1}}
\newcommand{\MU}{\lfloor 2^{nR_1} \rfloor}
\newcommand{\MV}{\lfloor 2^{nR_2} \rfloor}
\newcommand{\M}{\mathsf{M}}
\newcommand{\T}{\mathsf{T}}
\newcommand{\W}{\mathsf{W}}
\renewcommand{\L}{\mathsf{L}}
\newcommand{\tp}[1]{\textnormal{tp}(#1)}
\begin{document}
	\fontencoding{OT1}\fontsize{10}{11}\selectfont
	
	\title{Distributed Hypothesis Testing with Collaborative Detection}
	\author{Pierre Escamilla$^{\dagger}$$^{\star}$ \qquad Abdellatif Zaidi$^{\dagger}$ $^{\ddagger}$ \qquad Mich\`ele Wigger $^{\star}$ \vspace{0.3cm}\\
		$^{\dagger}$ Paris Research Center, Huawei Technologies, Boulogne-Billancourt, 92100, France\\
		$^{\ddagger}$ Universit\'e Paris-Est, Champs-sur-Marne, 77454, France\\
		$^{\star}$ LTCI, T\'el\'ecom ParisTech, Universit\'e Paris-Saclay, 75013 Paris, France\\
		\{\tt pierre.escamilla@huawei.com, abdellatif.zaidi@u-pem.fr\}\\
		\{\tt michele.wigger@telecom-paristech.fr\}
	}
	
	\maketitle

	\begin{abstract}
		A  detection system with a single sensor and two detectors is considered, where each of the terminals observes a memoryless source sequence, the  sensor sends a message  to both  detectors and the first detector  sends a message to the  second detector. Communication of these messages is assumed to be error-free but rate-limited. The joint probability mass function (pmf) of the  source sequences observed at the three terminals depends on an $\M$-ary hypothesis $(\M \geq 2)$, and the goal of the communication is that each detector can guess the underlying hypothesis. Detector $k$, $k=1,2$, aims to maximize the error exponent \textit{under hypothesis} $i_k$, $i_k \in \{1,\ldots,\M\}$, while ensuring a small probability of error under all other hypotheses. We study this problem in the case in which the detectors \mw{aim to maximize} their error exponents under the \textit{same} hypothesis (i.e., $i_1=i_2$) and in the case in which they \mw{aim to maximize} their error exponents under \textit{distinct} hypotheses (i.e., $i_1 \neq i_2$). For the setting in which $i_1=i_2$, we present an achievable exponents region for the case of positive communication rates, \mw{and show that it is optimal for a specific case of testing against independence. We also} characterize the optimal exponents region in the case of zero communication rates. For the setting in which $i_1 \neq i_2$, we characterize the \mw{optimal exponents region} in the case of zero communication rates. 
		

	\end{abstract}

	\section{Introduction}~\label{secI}
	Consider the multiterminal hypothesis testing scenario shown in Figure~\ref{fig-system-model}, where an encoder observes a discrete memoryless source sequence $X^n\triangleq (X_1,\ldots, X_n)$  and communicates with two remote detectors 1 and 2 over a \textit{common} noise-free bit-pipe of rate $R_1\geq 0$. Here, $n$ is a positive integer that denotes the blocklength. Detectors  1 and 2 observe correlated memoryless source sequences $Y^n_1\triangleq(Y_{1,1},\ldots ,Y_{1,n})$ and $Y^n_2\triangleq (Y_{2,1},\ldots ,Y_{2,n})$, and Detector 1 can communicate with Detector 2 over a noise-free bit-pipe of rate $R_2$. The sequence of observation triples $\{(X_{t}, Y_{1,t}, Y_{2,t})\}_{t=1}^n$ is independent and identically distributed (i.i.d) according to a joint probability mass function (pmf) that is determined by the hypothesis $\mc H \in \{1,\ldots,\M\}$. Under the hypothesis $\mc H=m$,
	\begin{equation}\label{eq:H}
	\{(X_{t}, Y_{1,t}, Y_{2,t})\}_{t=1}^n \textnormal{  i.i.d. } \sim P^{(m)}_{XY_1Y_2}.
	\end{equation}
	
	Detector 1 decides on a hypothesis $\hat{\mc{H}}_1 \in \{1,\ldots,\M\}$ with the goal to maximize the exponential decrease of the probability of \pe{ error \textit{under hypothesis} $\mc H = i_1 \in \{1,\ldots,\M\}$ (i.e., guessing $\hat{\mc H}_1 \neq i_1$ when $\mc H = i_1$),} while ensuring that the probability of \pe{error under $\mc H = m$ with $m \neq i_1$ (i.e., guessing $\hat{\mc H}_1 \neq  m$)} does not exceed some prescribed constant value $\epsilon_1 \in (0,1)$ for all sufficiently large blocklengths $n$. Similarly, Detector 2 decides on a hypothesis $\hat{\mc H}_2 \in \{1,\ldots,\M\}$ with the goal to maximize the exponential decrease of the probability of \pe{ error \textit{under hypothesis} $\mc H = i_2 \in \{1,\ldots,\M\}$ (i.e., guessing $\hat{\mc H}_2 \neq i_2$ when $\mc H = i_2$)}, while ensuring that the probability of \pe{error under $\mc H = m$ with $m \neq i_2$ (i.e., guessing $\hat{\mc H}_2 \neq m$)} does not exceed a constant value $\epsilon_2 \in (0,1)$ for all sufficiently large blocklengths $n$.

	In this paper, we study the problem of how cooperation among the two detectors can be used to improve the largest error exponents. We investigate this problem in both settings, the one in which the detectors aim at maximizing the error exponents under the \textit{same} hypothesis (i.e., $i_1=i_2$) and the one in which they aim at maximizing the error exponents under \textit{different} hypotheses (i.e., $i_1 \neq i_2$). 
	
	\begin{figure}[!t]
		\begin{center}
			\includegraphics[width=\linewidth]{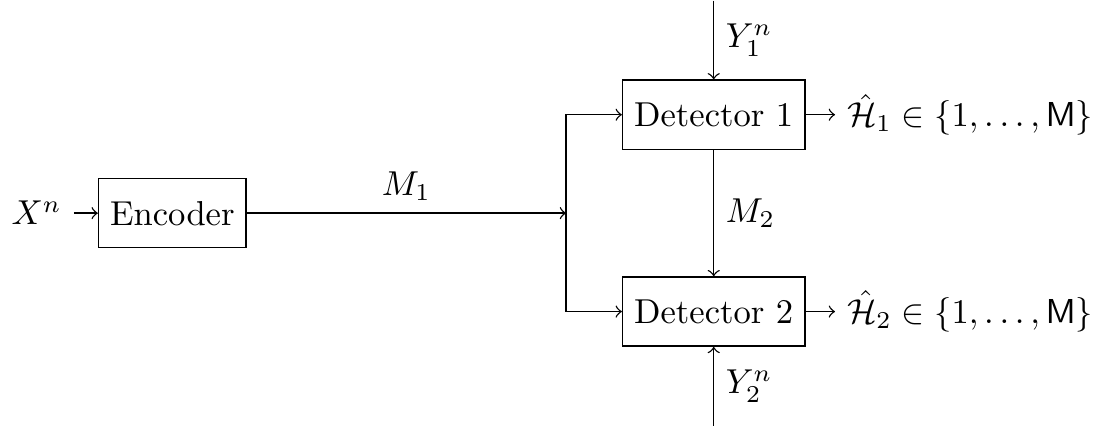}
			\caption{A Heegard-Berger type source coding model with unidirectional conferencing for multiterminal hypothesis testing.}
			\label{fig-system-model}
		\end{center}
		\vspace{-5mm}
	\end{figure}
	
	\subsection{Related Works}
	
\mw{Problems of distributed hypothesis testing are strongly rooted in both statistics and information theory. In particular,  the problem described above but without Detector 2 was studied in \cite{AC86, han_hypothesis_1987}. In \cite{AC86},   Ahlswede and Csisz\'ar  presented a single-letter lower bound on the largest possible error exponent.  It is optimal in the  special case of \emph{testing against independence}, but for the general case was  improved by Han in \cite{han_hypothesis_1987}.  Extensions of these results to networks with multiple encoders, multiple detectors, or interaction between terminals, can be found, e.g., in~\cite{wigger2016testing,SWT18,isit2018,xiang2012interactive,katz2016collaborative,zhao2014distributed,ZL15,SHA94,rahman2012optimality}. In particular, \cite{isit2018} studies the model considered here but without cooperation. \cite{wigger2016testing} and \cite{SWT18} study the model of\cite{isit2018} in the specific case of testing against independence and conditional independence, respectively.


	Han \cite{han_hypothesis_1987}  also  introduced a two-terminal  hypothesis testing problem where the  encoder sends a single bit to the decoder.  He established a single-letter characterization of the optimal error exponent of this system. 	
	Subsequently, Shalaby and Papamarcou~\cite{shalaby1992multiterminal} showed that Han's error exponent remains optimal even when the transmitter can send a sublinear number of bits to the receiver.  Key to the derivation of the converse proof in~\cite{han_hypothesis_1987} is an ingenious use of the ``Blowing-Up" lemma~\cite[Theorem 5.4]{csiszar2011information}. This lemma plays a similar crucial role for establishing converse parts for more general zero-rate hypothesis testing systems with exponential-type constraint on \pe{ all errors}\cite{HK89}. }

	\subsection{Focus and Main Contributions}
	
	As we already mentioned, one major goal in this paper is the study of the role of cooperation link between the two detectors in improving the error exponents, i.e., \textit{collaborative} decision making. On this aspect, we mention that the presence of the cooperation link makes the problem depart significantly from the aforementioned works. To see this, observe for example that even the seemingly easy case in which i) the rate $R_1$ is zero and ii) $M=2$ and both detectors making guesses on the same hypothesis about whether $(Y_1,Y_2)$ is independent of $X$ or not, which is solved fully in~\cite{wigger2016testing} in the case without cooperation, seems to become of formidable complexity in the presence of such a cooperation link. Partly, this is because \textit{binning} on the cooperation link may now be helpful in this scenario (See Remark~\ref{remak-test-against-independence}).

	 Another goal in this paper is to investigate the effect of the detectors aiming at maximizing their  error exponents under \textit{distinct} hypotheses. \mw{Such a scenario was already studied in \cite{isit2018}, but here we investigate it in a collaborative setup.}

\mw{The contributions of this paper are as follows. For the case} in which the detectors aim \mw{to maximize} their error exponents under the same hypothesis (i.e., $i_1=i_2$), we propose a coding and testing scheme for positive rates $R_1 \geq 0$ and $R_2 \geq 0$. Based on this scheme, we present an achievable exponents region for the case of positive communication rate, and we characterize the optimal exponents region for the case of zero communication rate. We also specialize the results to some specific cases of testing against independence for which we find the optimal error exponents. For the setting in which the detectors aim \mw{to maximize} their error exponents under distinct hypotheses, we characterize the optimal exponents region for the case of zero communication rates. 

	\subsection{Outline and Notation}
	
	The reminder of this paper is organzied as follows. Section~\ref{secII} contains a description of the system model. In Section~\ref{secIII} we study the model in which the two detectors aim at maximizing the error exponents under the same hypothesis, and in Section~\ref{secIV} we study the model in which the two detectors aim at maximizing the error exponents under the different hypotheses. Throughout, we use the following notations. The set of all possible types of $n$-length sequences over $\mathcal{X}$ is denoted  $\mathcal{P}^n(\mathcal{X})$.
	For $\delta>0$, the set of sequences $x^n$ that are $\delta$-typical with respect to the pmf $P_X$ is denoted  $\mc T^n_{\delta}(P_X)$. For random variables $X$ and $\bar{X}$ over the same alphabet $\mathcal{X}$  with pmfs $P_X$ and $P_{\bar{X}}$  satisfying $P_X \ll P_{\bar{X}}$ (i.e., for every $x \in \mathcal{X}$, if $P_{X}(x) > 0 $ then also $P_{\bar{X}}(x) > 0$), both $D(P_X\|P_{\bar{X}})$ and $D(X\|\bar{X})$ denote the Kullback-Leiber divergence between $X$ and $\bar{X}$.
	
	\vspace{3cm}
	
	\section{System Model}~\label{secII}
	
	Let $(X^n,Y^n_1,Y^n_2)$ be distributed i.i.d. according to one of $M \geq 2$ possible pmfs $\{P^{(m)}_{XY_1Y_2}\}_{m=1}^M$.   
 The encoder observes a source sequence $X^n$ and  applies encoding function
	\begin{equation}\label{eq:phi1}
	\phi_{1,n}\colon  \mc X^n \rightarrow \mc M_1\triangleq\{1,\ldots,\|\phi_{1,n}\|\}
	\end{equation} 
	to it. It then sends the resulting index 
	\begin{equation}
	M_1 = \phi_{1,n}(X^n)
	\end{equation}
	to both decoders. 
	
	Besides $M_1$, Decoder 1 also observes the source sequence $Y_1^n$. It applies two functions to the pair $(M_1, Y_1^n)$: an encoding function 
	\begin{equation}\label{eq:phi2}
	\phi_{2,n} \colon \mc M_1 \times \mc Y^n_1 \rightarrow \mc M_2 \triangleq\{1,\ldots,\|\phi_{2,n}\|\},
	\end{equation}
	and a decision function 
	\begin{equation}\label{eq:psi1}
	\psi_{1,n} \colon \mc M_1 \times \mc Y^n_1 \rightarrow \{1,\ldots,\M\}. 
	\end{equation}
	It sends the index 
	\begin{equation}\label{eq:M2}
	M_2=\phi_{2,n} (M_1, Y_1^n)
	\end{equation} to Decoder 2, and decides on the hypothesis 
	\begin{equation}\hat{\mc H}_1 \triangleq \psi_{1,n} (M_1, Y_1^n).
	\end{equation}

	\noindent Decoder 2 observes  $(M_1, M_2, Y_2^n)$ and applies the decision function 
	\begin{equation}\label{eq:psi2}
	\psi_{2,n}\colon  \mc M_1 \times \mc M_2 \times \mc Y^n_2 \rightarrow \{1,\ldots,\M\} 
	\end{equation}
	to this triple. It thus decides on the hypothesis  
	\begin{equation}\hat{\mc H}_2 \triangleq  \psi_{2,n} (M_1, M_2, Y_2^n).
	\end{equation} 
	
	%
	%
	
	%
	%
	
	\noindent Let $(i_1,i_2) \in \{1,\ldots,\M\}^2$ be given. The probabilities at Decoder~1 and Decoder~2 are given by
	\begin{IEEEeqnarray}{rCl}
	\alpha_{1,n} &\triangleq &\max_{m \neq i_1}\text{Pr}\big\{\hat{\mc H}_1 \neq m \big| \mc H = m \big\},\:\\ 
	\beta_{1,n}  & \triangleq& \text{Pr}\big\{\hat{\mc H}_1 \neq i_1 \big| \mc H = i_1 \big\} \nonumber\\
	\alpha_{2,n} & \triangleq& \max_{m \neq i_2}  \text{Pr}\big\{\hat{\mc H}_2 \neq m \big|\mc H = m \big\},\:\\  \beta_{2,n}  & \triangleq& \text{Pr}\big\{\hat{\mc H}_2 \neq i_2 \big|\mc H = i_2  \big\}. \nonumber
	\end{IEEEeqnarray}
	

	\begin{definition}
		Given rates $R_1, R_2 \geq 0$ and small positive numbers $\epsilon_1, \epsilon_2 \in (0,1)$, an \mw{error-exponents} pair $(\theta_1,\theta_2)$ is said achievable, if for each blocklength $n$ there exist  functions $\phi_{1,n}$, $\phi_{2,n}$, $\psi_{1,n}$ and $\psi_{2,n}$ as in \eqref{eq:phi1}, \eqref{eq:phi2}, \eqref{eq:psi1}, and \eqref{eq:psi2} so that the following limits hold:
		\begin{align}
		\varlimsup_{n\to \infty} \alpha_{1,n} \leq \epsilon_1, \quad \varlimsup_{n\to \infty}\alpha_{2,n} \leq \epsilon_2,
		\label{eq-definition-constant-constraints-typeI-errors}
		\end{align}
		\begin{IEEEeqnarray}{rCl}
		\theta_1 &\leq   \varliminf_{n \to \infty}-\frac{1}{n} \log \beta_{1,n}, \quad 
		\theta_2& \leq  \varliminf_{n \to \infty}-\frac{1}{n} \log \beta_{2,n},
		\end{IEEEeqnarray}
		and 
		\begin{align}
		\varlimsup_{n\to \infty} \frac{1}{n}\log \|\phi_{1,n}\| &\leq R_1, \quad
		\varlimsup_{n\to \infty} \frac{1}{n}\log \|\phi_{2,n}\| &\leq R_2 .
		\label{eq-definition-achievable-rates}
		\end{align}
	\end{definition}
	
	\begin{definition}
		Given rates $R_1, R_2 \geq 0$ and  numbers $\epsilon_1, \epsilon_2 \in (0,1)$, the  closure of the set of all achievable exponent pairs $(\theta_1, \theta_2)$ is called the \emph{\mw{error-exponents} region $\mathcal{E}(R_1, R_2, \epsilon_1, \epsilon_2)$.} 
	\end{definition}
	
	The main interest  of this paper is on characterizing the set of all achievable \mw{error-exponent} pairs. To do so, we distinguish the case in which the two detectors aim at maximizing the error exponents under the \textit{same} hypothesis, i.e., $i_1=i_2$, and the one in which they aim at maximizing the error exponents under \textit{different} hypotheses, i.e., $i_1 \neq i_2$. 
	
	\section{Cooperative Detection}~\label{secIII}
	
	In this section, we study the setting in which the two detectors aim at maximizing the error exponents under the \textit{same} hypothesis, i.e., $i_1=i_2$. For simplicity, we first consider the case of simple null hypothesis, i.e., $M=2$. For simplicity we set $i_1=i_2=2$, and   replace $P^{(1)}$ by  $P$ and $P^{(2)}$ by $\bar{P}$. For convenience, we assume that $\bar{P} (x, y_1,y_2)>0$ for all $(x,y_1,y_2)\in \mathcal{X} \times \mathcal{Y}_1\times \mathcal{Y}_2$.
	
	

	Our first result is an inner bound on the error-exponents region  $\mathcal{E}(R_1, R_2, \epsilon_1, \epsilon_2)$.
	To state the results, we make the following definitions.
	For given rates $R_1 \geq 0$ and $R_2 \geq 0$, define the following set of auxiliary random variables:
	\[ \mathcal{S} \left(R_1,R_2\right) \triangleq \left\{ \left(U,V\right) \colon \left. \begin{array}{c} U \mkv X \mkv (Y_1, Y_2 )\\ V \mkv( Y_1,U )\mkv (Y_2,X) \\ I\left(U;X\right) \leq R_1 \\ I\left(V;Y_1|U\right) \leq R_2 \end{array} \right.\right\}.\]
	Further, define for each $(U,V) \in \mathcal{S} \left(R_1,R_2\right)$:
	\begin{align*}
	\mathcal{L}_1\left(U\right) \triangleq \left\{(\tilde{U},\tilde{X},\tilde{Y}_1) \colon P_{\tilde{U}\tilde{X}} = P_{UX},\  P_{\tilde{U}\tilde{Y}_1} = P_{UY_1}\right\},
	\end{align*}
	and
	\begin{align*}
	\mathcal{L}_2\left(UV\right) &\triangleq \left\{(\tilde{U},\tilde{V},\tilde{X},\tilde{Y}_1,\tilde{Y}_2) \colon  \begin{array}{c}   P_{\tilde{U}\tilde{X}} = P_{UX}\\  P_{\tilde{U}\tilde{V}\tilde{Y}_1} = P_{UVY_1}\\  P_{\tilde{U}\tilde{V}\tilde{Y}_2} = P_{UVY_2} \end{array} \right\}. 
	\end{align*}
	Also, let  $(\bar{X}, \bar{Y}_1, \bar{Y}_2) \sim P_{\bar{X} \bar{Y}_1\bar{Y}_2}$ and 
		define the random variables $(\bar{U}, \bar{V})$ so as to satisfy $P_{\bar{U}|\bar{X}}=P_{U|X}$ and $P_{\bar{V}|\bar{Y_1}\bar{U}}=P_{V|Y_1U}$ and the Markov chains
	\begin{IEEEeqnarray}{rCl}
	\bar{U} \mkv \bar{X} \mkv (\bar{Y}_1,\bar{Y}_2) \\
	\bar{V} \mkv (\bar{Y_1},\bar{U}) \mkv (\bar{X},\bar{Y}_2).
	\end{IEEEeqnarray}

	\begin{theorem}[Positive Rates]~\label{theorem-lower-bounds-power-exponents-general-hypotheses-positive-rates} 
		Given rates $R_1 \geq 0$ and $R_2 \geq 0$ and numbers $\epsilon_1,  \epsilon_2 \in(0,1)$, the exponents region $\mathcal{E}(R_1, R_2,\epsilon_1, \epsilon_2)$ contains all nonnegative  pairs $(\theta_1,\theta_2)$ that for some $(U,V) \in \mathcal{S} \left(R_1,R_2\right)$ satisfy:
		\begin{align}\label{eq: an achievable rate exponent region for Heegard-Berger}
		& \theta_1 \leq  \min_{\tilde{U}\tilde{X}\tilde{Y}_1 \in \mathcal{L}_1\left(U\right)} D\left(\tilde{U}\tilde{X}\tilde{Y}_1||\bar{U}\bar{X}\bar{Y}_1\right)   \\
		& \theta_2 \leq \min_{\tilde{U}\tilde{V}\tilde{X}\tilde{Y}_1\tilde{Y}_2 \in \mathcal{L}_2\left(UV\right)} D\left(\tilde{V}\tilde{U}\tilde{X}\tilde{Y}_1\tilde{Y}_2||\bar{V}\bar{U}\bar{X}\bar{Y}_1\bar{Y}_2\right).
		\end{align}
		
	\end{theorem}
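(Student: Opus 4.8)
The plan is to establish this inner bound through a two-layer quantize-and-test scheme in the spirit of Ahlswede--Csisz\'ar, combined with a method-of-types analysis of the resulting error probabilities. Fix a pair $(U,V)\in\mc S(R_1,R_2)$ and a small $\delta>0$. I would generate a $U$-codebook of $\approx 2^{nI(U;X)}$ codewords drawn i.i.d.\ $\sim\prod_t P_U$, and on top of each $u^n$ a $V$-codebook of $\approx 2^{nI(V;Y_1|U)}$ codewords drawn i.i.d.\ $\sim\prod_t P_{V|U}$. The encoder finds an index $M_1$ with $(u^n(M_1),X^n)$ jointly typical and transmits it; since $I(U;X)\le R_1$, the covering lemma guarantees success with probability tending to one and the rate constraint on $\|\phi_{1,n}\|$ is met. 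Both detectors recover $u^n(M_1)$ from $M_1$. Detector~1 then finds an index $M_2$ with $(u^n(M_1),v^n(M_1,M_2),Y_1^n)$ jointly typical and forwards it (no binning, matching $I(V;Y_1|U)\le R_2$), so that Detector~2 recovers $v^n(M_1,M_2)$. The decision rules are typicality tests tuned to the null hypothesis $P$: Detector~1 declares $\hat{\mc H}_1=1$ iff $(u^n,Y_1^n)\in\mc T^n_\delta(P_{UY_1})$, Detector~2 declares $\hat{\mc H}_2=1$ iff $(u^n,v^n,Y_2^n)\in\mc T^n_\delta(P_{UVY_2})$, and each declares $2$ otherwise.

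The constraints $\limsup_n \alpha_{k,n}\le\epsilon_k$ are the easy direction: under $\mc H=1$ (law $P$) the covering steps succeed and conditional typicality ensures that $(u^n,Y_1^n)$ and $(u^n,v^n,Y_2^n)$ fall in the respective typical sets with probability approaching one, so in fact $\alpha_{k,n}\to 0$, which is stronger than required.

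The core of the argument is the exponent analysis of $\beta_{1,n}$ and $\beta_{2,n}$ under $\mc H=2$ (law $\bar P$). The key observation is that, averaged over the random codebooks, the chosen codewords behave to first order in the exponent like conditional i.i.d.\ draws: conditioned on $X^n$ the selected $u^n$ is essentially uniform over $\mc T^n_\delta(P_{U|X}|X^n)$, and conditioned on $(u^n,Y_1^n)$ the selected $v^n$ is essentially uniform over $\mc T^n_\delta(P_{V|Y_1U}|\,\cdot\,)$. Hence the effective joint law of $(\bar U^n,\bar V^n,\bar X^n,\bar Y_1^n,\bar Y_2^n)$ under $\bar P$ is exactly $\bar P_{XY_1Y_2}P_{U|X}P_{V|Y_1U}$, i.e.\ the law of $(\bar U,\bar V,\bar X,\bar Y_1,\bar Y_2)$ defined before the theorem. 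I would then apply the method of types: $\beta_{1,n}$ is the $\bar P$-probability that the joint type of $(u^n,X^n,Y_1^n)$ has marginal $P_{UX}$ (forced by covering) and marginal $P_{UY_1}$ (forced by the test), whose exponent equals $\min_{\mathcal L_1(U)}D(\tilde U\tilde X\tilde Y_1\|\bar U\bar X\bar Y_1)$; analogously $\beta_{2,n}$ has exponent $\min_{\mathcal L_2(UV)}D(\tilde V\tilde U\tilde X\tilde Y_1\tilde Y_2\|\bar V\bar U\bar X\bar Y_1\bar Y_2)$, matching the three marginal constraints $P_{UX}$, $P_{UVY_1}$, $P_{UVY_2}$ that define $\mathcal L_2(UV)$. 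Finally, since the averaged $\mathbb E[\alpha_{k,n}]\to 0$ and $\mathbb E[\beta_{k,n}]\le 2^{-n(\theta_k-o(1))}$, a Markov/expurgation argument extracts a single deterministic codebook meeting all four requirements simultaneously, and letting $\delta\downarrow 0$ closes the gap.

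I expect the main obstacle to be the exponent for $\beta_{2,n}$. Unlike the single-detector case, $v^n$ is produced from $(u^n,Y_1^n)$ while Detector~2's test involves the \emph{other} side information $Y_2^n$, so one must justify carefully that the two nested covering operations inject no spurious correlation beyond the Markov chains $\bar U\mkv\bar X\mkv(\bar Y_1,\bar Y_2)$ and $\bar V\mkv(\bar Y_1,\bar U)\mkv(\bar X,\bar Y_2)$, and that conditioning on the high-probability covering-success events does not move the dominating type in the minimization. Controlling these nested typicality events, and verifying that the minimizing type ranges over exactly the constraints defining $\mathcal L_2(UV)$ (no more and no fewer), is the delicate technical step; the remaining bounds are standard.
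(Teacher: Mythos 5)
Your proposal is correct and follows essentially the same route as the paper: a two-layer covering scheme ($U$-codebook at rate $I(U;X)$, conditional $V$-codebooks at rate $I(V;Y_1|U)$, no binning) with typicality tests at both detectors, and a method-of-types count of the error tuples whose joint type satisfies the marginal constraints defining $\mathcal{L}_1(U)$ and $\mathcal{L}_2(UV)$, yielding exponents $\min D(\cdot\|\bar{U}\bar{X}\bar{Y}_1)$ and $\min D(\cdot\|\bar{V}\bar{U}\bar{X}\bar{Y}_1\bar{Y}_2)$ after the same algebraic simplification. The delicate point you flag --- that $V$ is selected from $(U,Y_1^n)$ while Detector~2 tests against $Y_2^n$ --- is exactly what the paper handles by bounding the cardinality $K$ of tuples of each joint type in the error set $\mathcal{J}_n$ and multiplying by the per-type probability under $\bar{P}$.
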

	
	\vspace{0.2cm}
	
	\begin{proof}
		See Section~\ref{secV_subsecA}. 
	\end{proof}
	
	Theorem~\ref{theorem-lower-bounds-power-exponents-general-hypotheses-positive-rates} characterizes an inner bound on the exponent rate region $\mathcal{E}(R_1, R_2, \epsilon_1, \epsilon_2)$. The following results indicate that in some scenarios the inner bound coincides  with $\mathcal{E}(R_1, R_2, \epsilon_1, \epsilon_2)$.
	\begin{proposition} If
		\begin{equation}\label{eq:rates}
		R_1 \geq H(X)\quad \textnormal{and} \quad R_2 \geq H(Y_1|X),
		\end{equation}
		for any $\epsilon_1,\epsilon_2 \in(0,1)$ the exponents region $\mathcal{E}(R_1, R_2, \epsilon_1, \epsilon_2)$ coincides with the set of all  non-negative  pairs $(\theta_1, \theta_2)$ satisfying
		\begin{align}
		\theta_1 &\leq  D(XY_1\|\bar{X}\bar{Y}_1)\label{eq:cons1} \\
		\theta_2 &\leq D(XY_1Y_2\|\bar{X}\bar{Y}_1\bar{Y}_2).\label{eq:cons2}
		\end{align}
	\end{proposition}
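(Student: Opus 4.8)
The plan is to obtain achievability as a direct specialization of Theorem~\ref{theorem-lower-bounds-power-exponents-general-hypotheses-positive-rates} and the converse from two independent applications of the (constant-constraint) converse to Stein's lemma. Throughout I use that $\bar P(x,y_1,y_2)>0$ everywhere, which guarantees $P_{XY_1Y_2}\ll\bar P_{XY_1Y_2}$ together with all its marginals, so the divergences in \eqref{eq:cons1}--\eqref{eq:cons2} are finite.

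For achievability, I would evaluate the inner bound at the deterministic choice $U=X$ and $V=Y_1$. First I check that this lies in $\mathcal{S}(R_1,R_2)$: the Markov chains $U \mkv X \mkv (Y_1,Y_2)$ and $V \mkv (Y_1,U) \mkv (Y_2,X)$ hold trivially because $U$ is a function of $X$ and $V$ is a function of $(Y_1,U)$, while the rate constraints reduce to $I(U;X)=H(X)\le R_1$ and $I(V;Y_1|U)=H(Y_1|X)\le R_2$, i.e.\ exactly \eqref{eq:rates}. Next, for $U=X$ the constraint $P_{\tilde U \tilde X}=P_{UX}$ forces $\tilde U=\tilde X$ almost surely, so $\mathcal{L}_1(U)$ collapses to the single law $(\tilde X,\tilde Y_1)\sim P_{XY_1}$; likewise $V=Y_1$ makes $\mathcal{L}_2(UV)$ a singleton. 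Since $\bar U=\bar X$ and $\bar V=\bar Y_1$ follow from the definitions of $(\bar U,\bar V)$, the two minimizations reduce to $D(XY_1\|\bar X\bar Y_1)$ and $D(XY_1Y_2\|\bar X\bar Y_1\bar Y_2)$, so Theorem~\ref{theorem-lower-bounds-power-exponents-general-hypotheses-positive-rates} already delivers the entire box \eqref{eq:cons1}--\eqref{eq:cons2}.

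For the converse, the key observation is that the common-message/cooperation structure cannot help either detector beyond its centralized test: $\hat{\mc H}_1=\psi_{1,n}(\phi_{1,n}(X^n),Y_1^n)$ is a deterministic function of $(X^n,Y_1^n)$, and $\hat{\mc H}_2=\psi_{2,n}(M_1,M_2,Y_2^n)$ is a deterministic function of $(X^n,Y_1^n,Y_2^n)$, whatever the rates are. Fixing Detector~1, let $\mathcal{A}\subseteq (\mathcal X\times\mathcal Y_1)^n$ be its region $\{\hat{\mc H}_1=2\}$, so that $\alpha_{1,n}=P_{XY_1}^{n}(\mathcal A)$ and $\beta_{1,n}=\bar P_{XY_1}^{n}(\mathcal A^{c})$. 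Because the constraint $\varlimsup_n\alpha_{1,n}\le\epsilon_1<1$ is a \emph{constant} rather than an exponential one, the ordinary Stein converse applies: intersecting $\mathcal A^{c}$ with the typical set $\mc T^n_\delta(P_{XY_1})$, using $P_{XY_1}^{n}(\mathcal A^{c}\cap\mc T^n_\delta(P_{XY_1}))\ge 1-\epsilon_1-\delta>0$ for large $n$, and applying the pointwise bound $\bar P_{XY_1}^{n}\ge e^{-n(D(XY_1\|\bar X\bar Y_1)+\delta)}P_{XY_1}^{n}$ on that typical set, yields $-\tfrac1n\log\beta_{1,n}\le D(XY_1\|\bar X\bar Y_1)+o(1)$, i.e.\ $\theta_1\le D(XY_1\|\bar X\bar Y_1)$. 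The identical argument on the triple $(X^n,Y_1^n,Y_2^n)$ gives $\theta_2\le D(XY_1Y_2\|\bar X\bar Y_1\bar Y_2)$.

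Finally, since each converse bound is derived from the decision function of a single detector alone — Detector~1's bound never invokes $\psi_{2,n}$ and conversely — the two constraints hold simultaneously and with no tradeoff, so the region is exactly the box. The only point needing care, and the one I expect to be the main (if modest) obstacle, is this decoupling: one must argue cleanly that no joint coding across the two detectors can trade one exponent against the other once the rates saturate $H(X)$ and $H(Y_1|X)$. This is precisely what the ``function-of-the-full-data'' observation rules out, and it is also why the elementary Stein converse suffices here instead of the blowing-up lemma required for exponential-type error constraints.
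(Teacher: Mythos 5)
Your proposal is correct and follows essentially the same route as the paper: achievability by plugging $U=X$, $V=Y_1$ into Theorem~\ref{theorem-lower-bounds-power-exponents-general-hypotheses-positive-rates} (feasible precisely because of \eqref{eq:rates}), and a converse obtained by noting that each detector's decision is a function of the full data it could at best observe centrally, so the single-detector Stein exponents $D(XY_1\|\bar X\bar Y_1)$ and $D(XY_1Y_2\|\bar X\bar Y_1\bar Y_2)$ bound $\theta_1$ and $\theta_2$ separately. Your write-up merely makes explicit the singleton collapse of $\mathcal{L}_1(U)$, $\mathcal{L}_2(UV)$ and the change-of-measure step behind the paper's one-line converse.
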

	
	\begin{proof} Achievability follows by specializing Theorem~\ref{theorem-lower-bounds-power-exponents-general-hypotheses-positive-rates} to  $U \triangleq X$ and $V\triangleq Y_1$ and by noting that this choice is feasible, i.e., in $ \mathcal{S} \left(R_1,R_2\right)$ because of \eqref{eq:rates}. The converse holds because the right-hand side of \eqref{eq:cons1} coincides with the error exponent when Decoder~1 can directly observe the source sequences $X^n$ and $Y_1^n$, and the right-hand side of \eqref{eq:cons2} coincides with the error exponent when Decoder~2 can directly observe the source sequences $X^n$, $Y_1^n$, and $Y_2^n$. 
	\end{proof}

	Consider now the case of zero cooperation rate $R_2=0$. For given rate $R_1\geq 0$, define 
	\begin{equation}
	\mc S(R_1) \triangleq \big\{U \colon  I(U;X) \leq R_1, \; U \mkv X \mkv (Y_1,Y_2)\big\}.
	\end{equation}
	
	\begin{theorem}[Zero Cooperation Rate]~\label{theorem-lower-bounds-power-exponents-test-against-independence-positive-rates}
		If the pmfs $P_{XY_1Y_2}$ and $P_{\bar{X}\bar{Y}_1\bar{Y}_2}$ satisfy
		\begin{IEEEeqnarray}{rCl}
		P_{Y_1Y_2} &= P_{Y_1} P_{Y_2} , \quad P_{\bar{X}\bar{Y}_1\bar{Y}_2} &= P_XP_{Y_1}P_{Y_2},
		\end{IEEEeqnarray}
		then the asymptotic region
		\begin{equation}\bigcap_{\epsilon_1,\epsilon_2>0}{\mathcal{E}(R_1, 0, \epsilon_1, \epsilon_2)}
		\end{equation} coincides with the set of all nonnegative pairs $(\theta_1, \theta_2)$ that for  some $U \in \mc S(R_1)$ satisfy
		\begin{align}
		\label{power-exponent-test-vs-indep-with-indep-si-Decoder1}
		\theta_1 & \leq  I\left(U;Y_1\right) \\
		\theta_2 & \leq  I\left(U;Y_1\right) + I\left(U;Y_2\right).
		\label{power-exponent-test-vs-indep-with-indep-si-Decoder2}
		\end{align}
	\end{theorem}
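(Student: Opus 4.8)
The plan is to establish the two inclusions separately: achievability (every pair in the stated region is in $\bigcap_{\epsilon_1,\epsilon_2>0}\mathcal{E}(R_1,0,\epsilon_1,\epsilon_2)$) and the converse (every achievable pair satisfies \eqref{power-exponent-test-vs-indep-with-indep-si-Decoder1}--\eqref{power-exponent-test-vs-indep-with-indep-si-Decoder2} for some common $U\in\mc S(R_1)$). For achievability I would fix $U\in\mc S(R_1)$ and use a covering code at the encoder: generate $\approx 2^{n(I(U;X)+\delta)}$ codewords drawn i.i.d.\ from $P_U$, and let the encoder transmit the index $M_1$ of a codeword $U^n$ jointly typical with $X^n$; since $I(U;X)\le R_1$ this meets the rate constraint and the covering lemma guarantees such a codeword exists with high probability. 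Both decoders recover $U^n$ from $M_1$. Decoder~1 sets $\hat{\mc H}_1=1$ iff $(U^n,Y_1^n)\in\mc T^n_\delta(P_{UY_1})$ and forwards this single decision bit to Decoder~2 over the cooperation link; Decoder~2 sets $\hat{\mc H}_2=1$ iff Decoder~1 signalled $1$ \emph{and} $(U^n,Y_2^n)\in\mc T^n_\delta(P_{UY_2})$.

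The achievability analysis rests on the independence structure. Under $\mc H=2$ we have $P_{\bar X\bar Y_1\bar Y_2}=P_XP_{Y_1}P_{Y_2}$, so $U^n$, $Y_1^n$, $Y_2^n$ are (asymptotically) mutually independent; hence $\Pr\{(U^n,Y_1^n)\in\mc T^n_\delta\}\doteq 2^{-nI(U;Y_1)}$ and $\Pr\{(U^n,Y_2^n)\in\mc T^n_\delta\}\doteq 2^{-nI(U;Y_2)}$. Because $Y_1^n$ and $Y_2^n$ are conditionally independent given $U^n$, the two joint-typicality events are independent, so $\beta_{2,n}\doteq 2^{-n(I(U;Y_1)+I(U;Y_2))}$ and $\beta_{1,n}\doteq 2^{-nI(U;Y_1)}$, giving exactly \eqref{power-exponent-test-vs-indep-with-indep-si-Decoder1}--\eqref{power-exponent-test-vs-indep-with-indep-si-Decoder2}. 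Under $\mc H=1$ both typicality tests pass with high probability, so $\alpha_{1,n},\alpha_{2,n}\to 0$; since these vanish, the scheme satisfies every constraint $\epsilon_1,\epsilon_2>0$, which is what the intersection requires. Forwarding one bit costs rate $\tfrac1n\log 2\to0$, respecting $R_2=0$.

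For the converse I would first recover the single-detector bound. Decoder~1 sees only $(M_1,Y_1^n)$ and effectively tests $P_{XY_1}$ against $P_XP_{Y_1}$; the single-letterization underlying the Ahlswede--Csisz\'ar converse for testing against independence \cite{AC86}, with $U=(M_1,J)$ for a uniform time index $J$, produces a random variable satisfying $U\mkv X\mkv(Y_1,Y_2)$ and $I(U;X)\le R_1$, hence $U\in\mc S(R_1)$, with $\theta_1\le I(U;Y_1)$; the same $U$ yields $I(U;Y_2)$ when applied to Decoder~2's own observation. The heart of the converse is the bound $\theta_2\le I(U;Y_1)+I(U;Y_2)$ for this \emph{same} $U$. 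Here I would exploit that under $\mc H=2$ the pair $(M_1,M_2)$ is independent of $Y_2^n$, so the error probability factorizes as $\beta_{2,n}=\sum_{m_1,m_2}\Pr_{\bar P}\{M_1=m_1,M_2=m_2\}\,P_{Y_2}^n(\mc A_{m_1,m_2})$, where $\mc A_{m_1,m_2}$ is Decoder~2's acceptance region. The factor $P_{Y_2}^n(\mc A_{m_1,m_2})$ supplies the exponent $I(U;Y_2)$ exactly as in the single-detector converse, while the factor $\Pr_{\bar P}\{M_1=m_1,M_2=m_2\}$ measures how improbable the forwarded message is under $P_XP_{Y_1}$ and can contribute no more than the exponent $I(U;Y_1)$ of Decoder~1's test; the product structure $P_{Y_1}^nP_{Y_2}^n$ makes the two exponents add.

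The main obstacle is this last step: controlling the zero-rate message $M_2$. Since $M_2$ is a deterministic function of $(M_1,Y_1^n)$ it couples the two tests, so the factors cannot simply be bounded independently. I expect to invoke the blowing-up lemma \cite[Theorem~5.4]{csiszar2011information}, in the spirit of the zero-rate converses of Han \cite{han_hypothesis_1987} and Shalaby--Papamarcou \cite{shalaby1992multiterminal}: the type-I constraints $\alpha_{1,n}\le\epsilon_1$, $\alpha_{2,n}\le\epsilon_2$ force the acceptance region, after enlargement by a sublinear Hamming radius, to capture almost all of the conditional mass under $\mc H=1$, while the sub-exponential cardinality of $M_2$ (a consequence of $R_2=0$) guarantees that this enlargement perturbs the exponents only negligibly and, via a pigeonhole argument, lets me isolate a single dominating value of $M_2$. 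Combining the isolated $I(U;Y_1)$ contribution additively with the $I(U;Y_2)$ contribution, while verifying that the blow-up preserves the Markov structure $U\mkv X\mkv(Y_1,Y_2)$ and that the auxiliary $U$ matching both decoders' bounds remains feasible, is where I anticipate the most delicate work.
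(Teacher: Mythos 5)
Your achievability argument is essentially the paper's: the scheme you describe is exactly the specialization of Theorem~\ref{theorem-lower-bounds-power-exponents-general-hypotheses-positive-rates} to $R_2=0$ (constant $V$, one forwarded decision bit), and the exponent computation under $P_{\bar X\bar Y_1\bar Y_2}=P_XP_{Y_1}P_{Y_2}$ is fine. The problem is the converse, and specifically the step you yourself flag as the ``most delicate work'': you never actually produce the argument that makes the two exponents add, and the tool you propose for it --- the blowing-up lemma plus a pigeonhole over message values --- is not the right one here. Blowing-up/pigeonhole arguments are the mechanism for \emph{zero-rate} converses (they are indeed what the paper invokes for Theorem~\ref{theorem-optimal-power-exponents-zero-rate-compression} and Proposition~\ref{proposition: optimal error exponents zero rate compression cooperation case}); in the present theorem $M_1$ carries positive rate $R_1$, so isolating a single dominating $(m_1,m_2)$ cell by pigeonholing costs a factor $\|\phi_{1,n}\|$ that is exponential in $n$ and destroys the exponent. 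Moreover, your factorization $\beta_{2,n}=\sum_{m_1,m_2}\Pr_{\bar P}\{M_1=m_1,M_2=m_2\}P_{Y_2}^n(\mc A_{m_1,m_2})$ by itself does not let you bound the two factors separately and uniformly, which is exactly the coupling you worry about.

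The missing idea is a purely information-theoretic decomposition that sidesteps all of this. Starting as you do from $\theta_{2,n}\lesssim\frac{1}{n(1-\epsilon_2)}D\big(P_{Y_2^nM_1M_2|\mc H}\,\|\,P_{Y_2^nM_1M_2|\bar{\mc H}}\big)$, the independence of $(M_1,M_2)$ and $Y_2^n$ under $\bar{\mc H}$ gives the exact identity
\begin{equation*}
D\big(P_{Y_2^nM_1M_2|\mc H}\,\|\,P_{Y_2^nM_1M_2|\bar{\mc H}}\big)=I(Y_2^n;M_1M_2)+D\big(P_{M_1M_2|\mc H}\,\|\,P_{M_1M_2|\bar{\mc H}}\big).
\end{equation*}
The first term splits as $I(Y_2^n;M_1)+I(Y_2^n;M_2|M_1)$ with $I(Y_2^n;M_2|M_1)\le\log\|\phi_{2,n}\|=o(n)$ by $R_2=0$; the second term is bounded by $D\big(P_{Y_1^nM_1|\mc H}\,\|\,P_{Y_1^nM_1|\bar{\mc H}}\big)=I(Y_1^n;M_1)$ via the data-processing inequality for relative entropy, since $(M_1,M_2)$ is a function of $(M_1,Y_1^n)$. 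This is where the additive $I(U;Y_1)$ contribution comes from, with no blowing-up and no dominant-cell extraction; both $I(Y_1^n;M_1)$ and $I(Y_2^n;M_1)$ are then single-letterized with the \emph{same} auxiliary $U_k=(M_1,X^{k-1})$ (time-shared), using the Markov chain $U\mkv X\mkv(Y_1,Y_2)$, together with $nR_1\ge I(M_1;X^n)=n I(U;X)$ to certify $U\in\mc S(R_1)$. Without this decomposition your converse is incomplete at its central step.
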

	\begin{proof} Achievability follows by specializing Theorem~\ref{theorem-lower-bounds-power-exponents-general-hypotheses-positive-rates}  to $R_2=0$. The form in \eqref{power-exponent-test-vs-indep-with-indep-si-Decoder1} and \eqref{power-exponent-test-vs-indep-with-indep-si-Decoder2} is then obtained through algebraic manipulations and by using the log-sum inequality. For the converse, see Section~\ref{secV_subsecB}.
	\end{proof}
	
	Notice that the Theorem remains valid if only a single bit can be sent over the cooperation link. In fact it suffices that Detector 1 sends its decision to Detector 2. The latter then declares the null hypothesis $\mathcal{H}=0$, if and only if,   the message from Detector 1 indicates the null hypothesis  and also its own observation combined with the message from the encoder indicate the null hypothesis.
	\begin{remark}~\label{remak-test-against-independence}
		Theorem~\ref{theorem-lower-bounds-power-exponents-test-against-independence-positive-rates} requires that $R_2=0$ and the observations $Y_1$ and $Y_2$ are independent under both null and alternative hypotheses. The reader may wonder whether a similar optimality result can be obtained when these assumptions are relaxed, e.g., both detectors making a guess on whether $(Y_1,Y_2)$ is independent of $X$ or not with $R_2 \geq 0$ and $Y_1$ and $Y_2$ arbitrarily correlated. Such a result however can certainly not be obtained from Theorem~\ref{theorem-lower-bounds-power-exponents-test-against-independence-positive-rates}, because the communication over the cooperation link does employ binning to exploit Detector 2's side-information $Y_2^n$ about the source $Y_1^n$.
	\end{remark}
	
	\begin{remark}
		For the model of Theorem~\ref{theorem-lower-bounds-power-exponents-test-against-independence-positive-rates} without the cooperation link, the optimal error exponent at Detector 2 is $I(U;Y_2)$ only \cite{wigger2016testing}. The $I(U;Y_1)$-increase of this exponent is made possible by the cooperation on the link of rate $R_2$.
	\end{remark}
	
	
	We now consider the case of zero rates $R_1=R_2=0$. Define the following sets
	\begin{align}
	\label{definition-set1-theorem-optimal-power-exponents-zero-rate-compression}
	\mc L_1 &= \left\{ (\tilde{X},\tilde{Y}_1) \colon  P_{\tilde{X}}=P_X,  P_{\tilde{Y}_1}=P_{Y_1} \right\} \\
	\mc L_2 &= \left\{ (\tilde{X},\tilde{Y}_1,\tilde{Y}_2) \colon P_{\tilde{X}}=P_X,  P_{\tilde{Y}_1}=P_{Y_1}, P_{\tilde{Y}_2}=P_{Y_2} \right\}.
	\label{definition-set2-theorem-optimal-power-exponents-zero-rate-compression}
	\end{align}

	\begin{theorem}[Zero Rates]\label{theorem-optimal-power-exponents-zero-rate-compression}
		If  all $(x,y_1\pe{ ,y_2}) \in \mc X\times\mc Y_1\pe{ \times\mc Y_2}$ have  positive probabilities under $\mathcal{H}=1$,
		$
		P_{\bar{X}\bar{Y}_1\pe{\bar{Y}_2}}(x,y_1\pe{,y_2}) > 0, 
		$
		then for any $\epsilon_1, \epsilon_2 \in (0,1)$ the exponents region $\mathcal{E}(0, 0, \epsilon_1, \epsilon_2)$ coincides with the set of all nonnegative  pairs $(\theta_1, \theta_2)$ satisfying
		\begin{align}
		\label{power-exponent-zero-rates-Decoder1}
		\theta_1 &\leq  \min_{\tilde{X}\tilde{Y}_1 \in \mathcal{L}_1 } D\left(\tilde{X}\tilde{Y}_1||\bar{X}\bar{Y}_1\right) \\ 
		\theta_2 &\leq  \min_{\tilde{X}\tilde{Y}_1\tilde{Y}_2 \in \mathcal{L}_2} D\left(\tilde{X}\tilde{Y}_1\tilde{Y}_2||\bar{X}\bar{Y}_1\bar{Y}_2\right).
		\label{power-exponent-zero-rates-Decoder2}
		\end{align}
	\end{theorem}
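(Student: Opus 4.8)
The plan is to get achievability for free from Theorem~\ref{theorem-lower-bounds-power-exponents-general-hypotheses-positive-rates} and to prove the (strong) converse by a blowing-up argument carried out separately at each decoder. For achievability I specialize Theorem~\ref{theorem-lower-bounds-power-exponents-general-hypotheses-positive-rates} to $R_1=R_2=0$ with both auxiliaries $U$ and $V$ deterministic: this choice lies in $\mathcal{S}(0,0)$ because then $I(U;X)=I(V;Y_1|U)=0$, and it collapses the constraint sets to $\mathcal{L}_1(U)\cong\mathcal{L}_1$ and $\mathcal{L}_2(UV)\cong\mathcal{L}_2$, so the two bounds of Theorem~\ref{theorem-lower-bounds-power-exponents-general-hypotheses-positive-rates} become exactly \eqref{power-exponent-zero-rates-Decoder1} and \eqref{power-exponent-zero-rates-Decoder2}. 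Operationally this is the encoder forwarding the type of $X^n$ and Decoder~1 forwarding the type of $Y_1^n$---each $O(\log n)$ bits, hence zero rate---after which each decoder accepts $\mathcal{H}=1$ iff the marginal types available to it are $\delta$-close to $(P_X,P_{Y_1})$, resp.\ $(P_X,P_{Y_1},P_{Y_2})$.

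For the converse consider Decoder~1 first and let $\mathcal{A}_1=\{(x^n,y_1^n)\colon\psi_{1,n}(\phi_{1,n}(x^n),y_1^n)=1\}$ be its acceptance region, so that $P^n(\mathcal{A}_1)\geq 1-\epsilon_1-o(1)$ and $\beta_{1,n}=\bar P^n(\mathcal{A}_1)$; I must lower bound $\bar P^n(\mathcal{A}_1)$. The region splits into message-indexed rectangles $\mathcal{A}_1=\bigcup_{m_1}\mathcal{C}_{m_1}\times\mathcal{B}_{m_1}$ with $\mathcal{C}_{m_1}=\phi_{1,n}^{-1}(m_1)$ depending only on $x^n$ and $\mathcal{B}_{m_1}=\{y_1^n\colon\psi_{1,n}(m_1,y_1^n)=1\}$ depending only on $y_1^n$. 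As $R_1=0$ there are $\|\phi_{1,n}\|=2^{o(n)}$ messages; these rectangles are disjoint, so pigeonholing gives an $m_1^\star$ with $P^n(\mathcal{C}_{m_1^\star}\times\mathcal{B}_{m_1^\star})\geq 2^{-o(n)}$, and marginalizing yields $P_X^n(\mathcal{C}_{m_1^\star})\geq 2^{-o(n)}$ and $P_{Y_1}^n(\mathcal{B}_{m_1^\star})\geq 2^{-o(n)}$.

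Next I apply the blowing-up lemma~\cite[Theorem~5.4]{csiszar2011information} with radius $\ell_n$ chosen so that $\ell_n/\sqrt n\to\infty$ and $\ell_n/n\to 0$: the blowups $\tilde{\mathcal{C}},\tilde{\mathcal{B}}$ then satisfy $P_X^n(\tilde{\mathcal{C}})\to 1$ and $P_{Y_1}^n(\tilde{\mathcal{B}})\to 1$ super-polynomially fast, so each captures all but an $o(1)$ counting-fraction of its type class and $\tilde{\mathcal{C}}\times\tilde{\mathcal{B}}$ captures all but an $o(1)$ fraction of the joint type class $\mathcal{T}(Q^\star)$, where $Q^\star$ attains $\min_{\mathcal{L}_1}D(\cdot\|\bar X\bar Y_1)=:\theta_1^\star$ (its marginals being $P_X,P_{Y_1}$). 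Since $\bar P^n$ is constant on $\mathcal{T}(Q^\star)$ this gives $\bar P^n(\tilde{\mathcal{C}}\times\tilde{\mathcal{B}})\geq 2^{-n(\theta_1^\star+o(1))}$. Because $\bar P$ has full support and $\ell_n=o(n)$, enlarging a set by a Hamming ball of radius $\ell_n$ in each block multiplies its $\bar P^n$-mass by at most $\binom{n}{2\ell_n}(\min_{x,y_1}P_{\bar X\bar Y_1}(x,y_1))^{-2\ell_n}=2^{o(n)}$, so $\bar P^n(\mathcal{A}_1)\geq\bar P^n(\mathcal{C}_{m_1^\star}\times\mathcal{B}_{m_1^\star})\geq 2^{-o(n)}\,\bar P^n(\tilde{\mathcal{C}}\times\tilde{\mathcal{B}})\geq 2^{-n(\theta_1^\star+o(1))}$, i.e.\ $\theta_1\leq\theta_1^\star$. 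The Decoder~2 bound is structurally identical: its acceptance region decomposes into boxes $\mathcal{C}_{m_1}\times\mathcal{D}_{m_1,m_2}\times\mathcal{E}_{m_1,m_2}$ over the $2^{o(n)}$ pairs $(m_1,m_2)$, a dominant box has all three marginal masses $\geq 2^{-o(n)}$, and blowing up the three coordinate blocks reproduces \eqref{power-exponent-zero-rates-Decoder2} with $\mathcal{L}_2$ replacing $\mathcal{L}_1$.

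I expect the crux to be getting the orientation of the blowing-up step right: the quantity I need to bound from below is a type-II mass, yet blowing up enlarges a set and a priori only bounds it from above. What rescues the argument is exactly the full-support hypothesis on $\bar P$, which caps the mass inflation of an $o(n)$-blowup at a subexponential factor and thereby lets me transport the ``captures the type class'' conclusion from $\tilde{\mathcal{C}}\times\tilde{\mathcal{B}}$ back to the original rectangle. A secondary subtlety is to make the blowup mass approach $1$ fast enough (super-polynomially, via $\sqrt n\ll\ell_n\ll n$) to beat the $(n+1)^{|\mathcal{X}|}$-type polynomial incurred when converting $P^n$-measure into a counting fraction of a type class; this works for every fixed $\epsilon_1,\epsilon_2<1$, which is precisely why the region is independent of the type-I levels.
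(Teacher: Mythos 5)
Your proposal is correct and follows essentially the same route as the paper: achievability by specializing Theorem~\ref{theorem-lower-bounds-power-exponents-general-hypotheses-positive-rates} to $R_1=R_2=0$ (your deterministic choice of $U,V$ is just a cleaner way of performing the log-sum reduction the paper alludes to), and a converse that is precisely the trivariate variant of \cite[Theorem~3]{shalaby1992multiterminal} which the paper invokes without proof --- you have simply written out the underlying blowing-up argument (dominant rectangle via pigeonholing over the $2^{o(n)}$ messages, blow up each coordinate block, capture the joint type class, and undo the blow-up at a $2^{o(n)}$ cost using the full support of $\bar{P}$). The only imprecision is your radius condition: $\ell_n/\sqrt{n}\to\infty$ is not sufficient in general; you need $\ell_n \gg n\sqrt{\delta_n}$ (while keeping $\ell_n = o(n)$), where $2^{-n\delta_n}$ with $\delta_n\to 0$ is the pigeonhole lower bound on the rectangle's mass --- a standard fix that does not affect the argument.
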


	\begin{proof}
		Achievability follows by specializing Theorem~\ref{theorem-lower-bounds-power-exponents-general-hypotheses-positive-rates} to $R_1=R_2=0$. The form in \eqref{power-exponent-zero-rates-Decoder1} and \eqref{power-exponent-zero-rates-Decoder2} is then obtained through algebraic manipulations and  application of the log-sum inequality.
		\pe{ The converse can be proved by invoking a slight variation of~\cite[Theorem 3]{shalaby1992multiterminal} in which the distributions are trivariate, instead of bivariate.}
	\end{proof}

	Notice that Theorem~\ref{theorem-optimal-power-exponents-zero-rate-compression} is a strong converse, i.e., it holds for any  values of $\epsilon_1, \epsilon_2\in(0,1)$. 
	Moreover,  it can be achieved  with only a single bit of communication from the encoder to the Detectors 1 and 2, and with only a single bit of communication from Detector 1 to Detector 2. It suffices that the encoder and  Detector 1 send a single bit that simply indicates whether their observed sequences $X^n$ and $Y_1^n$ are $\delta$-typical according to the marginal laws $P_X$ and $P_{Y_1}$. Detector 1 decides on the null hypothesis if both these tests are successful, and Detector 2 decides on the null hypothesis if both tests are successful and  also its own observation $Y_2^n$ is $\delta$-typical according to the marginal $P_{Y_2}$. The analysis of this scheme is similar to the analysis in \cite{han_hypothesis_1987}. 
	From this simple coding scheme, one can conclude  that  the optimal error exponent can be attained even if the encoder and both detectors are only told whether their  sequences are typical with respect to the marginals under  $\mc H$;  there is no need for them to observe the exact sequences.

	\section{Concurrent Detection}\label{secIV}
	
	We now turn to the setting in which the two detectors aim at maximizing the  error exponents under  \textit{different} hypotheses, i.e., $i_1 \neq i_2$. Without loss of generality, let 
	\begin{equation}i_1 = 1 \quad \textnormal{and} \quad i_2=2.
	\end{equation}   
\mw{For convenience, assume that for all $(x,y_1,y_2)\in \mathcal{X} \times \mathcal{Y}_1 \times \mathcal{Y}_2$ both probabilities $P^{(1)}(x,y_1)$ and $P^{(2)}(x,y_1,y_2)$ are positive.} 

	
	
	
	
	\begin{proposition}~\label{proposition: optimal error exponents zero rate compression cooperation case}
		For any pair $\epsilon_1,\epsilon_2$, the zero-rates exponents region $\mc E(0,0,\epsilon_1,\epsilon_2)$ is given by the set of all non-negative pairs 
		$(\theta_1,\theta_2)$ satisfying		
		\begin{IEEEeqnarray}{cRl}\label{eq:exponent region for zero rate with enought bits to send answer from 1 to two.}
		\theta_1 &\leq& \min_{m \neq 1} \min_{ \substack{\tilde P_{XY_1}: \\ \tilde P_{X} = P^{(m)}_X, \quad  \tilde P_{Y_1} = P^{(m)}_{Y_1} }  }D( \tilde P_{XY_1} \| 	P^{(1)}_{XY_1})  \label{eq:exponent 1 for zero rate with enought bits to send answer from 1 to two}\\
		\theta_2 &\leq& \min_{m \neq 2}  \min_{ \substack{\tilde P_{XY_1Y_2}: \\ \tilde P_{X} = P^{(m)}_X, \quad  \tilde P_{Y_1} = P^{(m)}_{Y_1},  \\  \tilde P_{Y_2} = P^{(m)}_{Y_2} } } D( \tilde P_{XY_1Y_2} \| 
		P^{(2)}_{XY_1Y_2}). \IEEEeqnarraynumspace \label{eq:exponent 2 1 for zero rate with enought bits to send answer from 1 to two} 
		\end{IEEEeqnarray}
		
	\end{proposition}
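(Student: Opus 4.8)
The plan is to exploit the fact that, at zero rates, the two decisions decouple: $\hat{\mc H}_1$ depends only on $(M_1,Y_1^n)$ and $\hat{\mc H}_2$ only on $(M_1,M_2,Y_2^n)$. I would therefore prove a matching achievability and a (strong) converse for each of $\theta_1$ and $\theta_2$ separately, and obtain the outer minimizations over $m$ by reducing, for every alternative $m\neq i_k$, to a binary zero-rate testing problem in which $P^{(m)}$ plays the role of the null and $P^{(i_k)}$ the role of the alternative.

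For achievability I would use a purely type-based scheme. Since the number of types of an $n$-sequence is polynomial in $n$, the encoder can describe the exact type $\mathrm{t}_X$ of $X^n$ with $O(\log n)=o(n)$ bits, and Detector~1 can likewise forward the type $\mathrm{t}_{Y_1}$ of $Y_1^n$ to Detector~2; both messages respect the zero-rate constraint. Detector~1, knowing $(\mathrm{t}_X,\mathrm{t}_{Y_1})$, declares $\hat{\mc H}_1=1$ unless these marginal types are $\delta$-close to $(P^{(m)}_X,P^{(m)}_{Y_1})$ for some $m\neq 1$, in which case it declares such an $m$; Detector~2 applies the analogous rule to $(\mathrm{t}_X,\mathrm{t}_{Y_1},\mathrm{t}_{Y_2})$ with target hypothesis $2$. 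Under $\mc H=m$ the relevant marginal types concentrate at those of $P^{(m)}$, so each detector declares the true hypothesis with probability tending to one (ties, when several hypotheses share all relevant marginals, being broken arbitrarily), which meets the constant constraints \eqref{eq-definition-constant-constraints-typeI-errors} for every $\epsilon_1,\epsilon_2\in(0,1)$. Under $\mc H=i_k$ the only error event is that the observed marginal types coincidentally match those of some alternative $m$, whose probability, by the method of types (Sanov's theorem), has exponent equal to the inner minimum in \eqref{eq:exponent 1 for zero rate with enought bits to send answer from 1 to two}; letting $\delta\to 0$ and keeping the dominant alternative yields the claimed $\theta_1$, and analogously $\theta_2$.

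For the converse at Detector~1, fix $m\neq 1$ and view $P^{(m)}$ as the null and $P^{(1)}$ as the alternative, with null-acceptance region $\{\hat{\mc H}_1\neq 1\}\subseteq\mc M_1\times\mc Y_1^n$. Since deciding $1$ and deciding $m$ are exclusive, the constant constraint gives $\Pr\{\hat{\mc H}_1=1\mid\mc H=m\}\le\alpha_{1,n}\le\epsilon_1$, a bounded type-I error, while $\beta_{1,n}=\Pr\{\hat{\mc H}_1\neq 1\mid\mc H=1\}$ is exactly the type-II error. This is the bivariate zero-rate testing problem of Shalaby--Papamarcou (the source $X^n$ seen only through the sub-exponential message $M_1$, the side information $Y_1^n$ seen in full), whose converse---established via the Blowing-Up lemma and holding for every $\epsilon_1\in(0,1)$---bounds the type-II exponent by $\min_{\tilde P_{XY_1}:\,\tilde P_X=P^{(m)}_X,\ \tilde P_{Y_1}=P^{(m)}_{Y_1}}D(\tilde P_{XY_1}\|P^{(1)}_{XY_1})$. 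As this holds for every $m\neq 1$, the binding (smallest) bound gives \eqref{eq:exponent 1 for zero rate with enought bits to send answer from 1 to two}; the positivity assumption on $P^{(1)}$ ensures finite divergences and that blow-up perturbations stay within the support.

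The main obstacle is the converse for Detector~2, which needs the trivariate analogue of the Shalaby--Papamarcou bound in a genuinely decentralized, two-hop zero-rate setting: Detector~2 learns about $X^n$ only through $M_1$ and about $Y_1^n$ only through $M_2=\phi_{2,n}(M_1,Y_1^n)$, so the joint type of $(X^n,Y_1^n)$ is never available, even though each message is individually sub-exponential. The delicate point is to show that this cascade description, together with the full observation $Y_2^n$, still forces the type-II exponent (for null $P^{(m)}$, alternative $P^{(2)}$) to be bounded by the three-marginal-matching quantity $\min_{\tilde P_{XY_1Y_2}:\,\tilde P_X=P^{(m)}_X,\ \tilde P_{Y_1}=P^{(m)}_{Y_1},\ \tilde P_{Y_2}=P^{(m)}_{Y_2}}D(\tilde P\|P^{(2)})$. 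I expect to obtain this from the same trivariate variation of \cite{shalaby1992multiterminal} already invoked for Theorem~\ref{theorem-optimal-power-exponents-zero-rate-compression}, applied with the combined description $(M_1,M_2)$ and a Blowing-Up argument on the $Y_2^n$-coordinate, and then to minimize over $m\neq 2$ to reach \eqref{eq:exponent 2 1 for zero rate with enought bits to send answer from 1 to two}.
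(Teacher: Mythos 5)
Your proposal matches the paper's proof in both parts: a zero-rate marginal-type-checking scheme for achievability (the paper's encoder sends only the index of the typical set $\mc T^n_{\mu}(P_X^{(m)})$ containing $X^n$ rather than the full type, but the decision regions and the Sanov-type exponent analysis are the same), and for the converse the same trivariate variation of the Shalaby--Papamarcou zero-rate bound, applied to the combined sublinear description $(M_1,M_2)$ of $(X^n,Y_1^n)$ together with the fully observed $Y_2^n$. The paper gives essentially no more detail on either step than you do, so this is the intended argument.
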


	\begin{proof} 
We first propose a coding scheme achieving this performance. Fix $\mu'' > \mu' > \mu >0$. 
		\mw{	\begin{IEEEeqnarray*}{rCl} M_1=\left\{ \begin{array}{ll}  
				m, & \textnormal{if }x^n \in \mc T^n_{\mu}(P^{(m)}_X) \textnormal{ for } m\in \{1,\ldots,\M\}, \qquad \\
				\M+1,  & \textnormal{otherwise}. \end{array} \right.
					 \end{IEEEeqnarray*} 
			Given that Detector 1 observes message $M_1=m_1$ and source sequence $Y_1^n=y_1^n$, it does the following. If $m_1=\M+1$, it declares $\hat{\mathcal{H}}_1=1$ and sends $M_2=\M+1$ over the cooperation link to Detector $2$. Otherwise, it checks whether 
			\begin{IEEEeqnarray}{rCl}
				y_1^n\in \mc T^n_{\mu'}(P^{(m_1)}_{Y_1}).
				 \end{IEEEeqnarray} 
				 If successful, Detector 1 declares $\hat{\mathcal{H}}_1=m_1$ and sends $M_2=m_1$. Otherwise, it declares $\hat{\mathcal{H}}_1=1$ and sends $M_2=\M+1$. 
				 
				 Given that Detector 2 observes messages $M_1=m_1$ and $M_2=m_2$ and source sequence $Y_2^n=y_2^n$, it does the following. If $m_2=\M+1$, Detector $2$ declares $\hat{\mathcal{H}}_2=2$. Otherwise, it  checks whether 
				 			\begin{IEEEeqnarray}{rCl}
				 				y_2^n\in \mc T^n_{\mu''}(P^{(m)}_{Y_2}).
		\end{IEEEeqnarray}
		If successful, it declares $\hat{\mathcal{H}}_2=m$. Otherwise it declares $\hat{\mathcal{H}}_2=2$.}
		
To summarize, Detector~1 declares $\hat{\mathcal{H}}_1=m$, for $m\in \pe{\{ 2,\ldots,\M\}}$ if and only if $(x^n,y_1^n) \in \mc T^n_{\mu}(P_X^{(m)}) \times  \mc T^n_{\mu'}(P_{Y_1}^{(m)})$, and  Detector~2 declares $\hat{\mathcal{H}}_2=m$, for $\pe{m\in \{1,\ldots,\M\}}\backslash \{2\}$, if and only if $(x^n,y_1^n,y_2^n) \in \mc T^n_{\mu}(P_X^{(m)}) \times  \mc T^n_{\mu'}(P_{Y_1}^{(m)})\times  \mc T^n_{\mu''}(P_{Y_2}^{(m)})$. The analysis of the scheme is standard and omitted.

		%
		
	The converse  can be proved by invoking a slight variation of~\cite[Theorem 3]{shalaby1992multiterminal} in which the distributions are trivariate, instead of bivariate therein.
	\end{proof}
	
\mw{
	\begin{remark}
	In the previously studied scenario where $i_1=i_2$, the optimal exponents region with zero-rate communication can be achieved with single bits of communication. That means, it suffices to send $M_1\in\{1,2\}$ and $M_2\in\{1,2\}$. It can be shown that this is not the case in the scenario considered here, where $i_1 \neq i_2$. Clearly, in above scheme, both $M_1$ and $M_2$ take on value in $\{1,\ldots, \M+1\}$.\footnote{The scheme could easily be changed to have  $M_2\in\{1,2\}$. In fact, in the scheme either $M_2=M_1$ or  $M_2=\M+1$. So, it suffices that Detector 2 sends $1$ to indicate that it agrees with $M_1$ and $2$ to indicate that it disagrees.}
	If $M_1$  is valued in an alphabet of size $\|\phi_{1,n}\|\leq \M$, then the performance in Proposition~\ref{proposition: optimal error exponents zero rate compression cooperation case} is generally not achievable. 
In particular, while the optimal exponents region in Proposition~\ref{proposition: optimal error exponents zero rate compression cooperation case} is a rectangle, this is not true anymore when  $ \| \phi_{1,n}\|\leq \M$. For the special case where $M_2$ is deterministic (i.e., no cooperation is possible), this was already observed in \cite{isit2018}.
	\end{remark}
	}	


	\section{Proofs}~\label{secV}
	\subsection{Proof of Theorem~\ref{theorem-lower-bounds-power-exponents-general-hypotheses-positive-rates}}~\label{secV_subsecA}
	%
	\textit{1) Preliminaries:} 
	Choose a small positive number $\delta >0$ and a pair of auxiliary random variables $(U, V)$ satisfying the Markov chains
	\begin{IEEEeqnarray}{rCl}
	U \mkv X \mkv (Y_1, Y_2 )
	\\ V \mkv (Y_1,U )\mkv (Y_2,X).
	\end{IEEEeqnarray}
	Fix the rates
	\begin{align}
	R_1 &= I(U;X) + \xi(\delta) \\
	R_2 &= I(V;Y_1|U) + \xi(\delta),
	\end{align}
	where $\xi(\cdot ) \to 0$ is a function that tends to 0 as its argument tends  to 0. By this choice,  $(U,V)\in \mathcal{S}(R_1,R_2)$.  
	
	\textit{2) Codebook Generation:} 
	Randomly generate the codebook  $\mc C_{U} \triangleq\big\{ u^n(m_1), \: m_1 \in \pe{\{1,\ldots,\MU \}}\big\}$ by drawing each entry of each codeword $u^n(m_1)$ i.i.d. according to  $P_U$. 
	
	For each index $m_1 \in \pe{\{1,\ldots,\MU\}}$, randomly  construct the codebook  $\mc C_{V} (m_1)\triangleq \{v^n(m_2|m_1), \: m_2 \in \pe{\{1,\ldots,\MV\}}\}$ by drawing the $j$-th entry of each codeword $v^n(m_2|m_1)$ according to the conditional pmf $P_{V|U}(\cdot|u_j(m_1))$, where  $u_j(m_1)$ denotes the $j$-th component of codeword $u^n(m_1)$.

	Reveal all  codebooks to all terminals.

	\textit{3) Encoder:} Given that it observes the source sequence $X^n=x^n$, the encoder looks for an index $m_1\in \pe{\{1,\ldots,\MU\}}$ such that 
	\begin{equation}
	(u^n(m_1), x^n) \in \mc T^n_{\delta/8}(P_{UX}).
	\end{equation}
	
	If no such index $m_1$ is found, the encoder sends the index $m_1=0$ over the common noise-free pipe to both decoders. If one or more indices can be found, the encoder selects one of them uniformly at random and sends it to both decoders.

	\textit{4) Decoder 1:} Given that Decoder 1 receives an index $M_1=m_1$ not equal to 0 and that it observes the source sequence $Y_1^n= y_1^n$, it checks whether 
	\begin{equation}\label{eq:test2}
	(u^n(m_1), y_1^n) \in \mc T^n_{\delta/4}(P_{UY}).
	\end{equation} If the test is successful, Decoder 1 decides on the null hypothesis, i.e., $\hat{\mc H}_1= \bar{\mc H}$.  Otherwise, it decides on the alternative hypothesis $\hat{\mc H}_1=\bar{\mc H}$.
	
	If $m_1 \neq 0$ and \eqref{eq:test2} holds,  Decoder 1 looks for an index $m_2\in\pe{\{1,\ldots,\MV\}}$ such that 
	\begin{equation}(u^n(m_1), v^n(m_2|m_1), y_1^n) \in \mc T^n_{\delta/2}(P_{UVY_1}).
	\end{equation}
	If one or more such indices can be  found, Decoder 1 selects one of them uniformly at random and sends it over the cooperation link to Decoder~2. Otherwise it sends $M_2=0$.

	\textit{5) Decoder 2:} Given that Decoder 2 observes the indices  $M_1=m_1$ and $M_2=m_2$ and the source sequence $Y_2^n=y_2^n$, it checks whether 
	\begin{equation}
	(u^n(m_1), v^n(m_2|m_1), y_2^n) \in \mc T^n_{\delta}(P_{UVY_2}).
	\end{equation} If this check is successful, Decoder~2 decides on the null hypothesis, $\hat{\mc H}_2=\mc H$. Otherwise, it decides on the alternative hypothesis $\hat{\mc H}_2=\bar{\mc H}$. 
	
	\textit{6) Analysis:} The analysis can be performed along similar lines as in \cite{han_hypothesis_1987}. The main difference is the analysis of the  probability of  error under $\mc H = i_2$  at Decoder~2, which is detailed out in the following. 

	\noindent Define for each $y_2^n \in \mc Y^n_2$ and each pair of indices $(i,j) \in \pe{\{1,\ldots,\MU\}}\times\pe{\{1,\ldots,\MV\}}$ the set
	\begin{equation}
	\mc S_{ij}(y_2^n) := \mc Q_i \times \{ u^n(i)\} \times \{ v^n(j|i)\} \times \mc G_{ij} \times \{ y_2^n\}, \nonumber 
	\end{equation}
	where $\mc Q_i \subseteq \mc X^n$ is the set of all sequences $ x^n$ for which the encoder sends $M_1=i$ to the two decoders, and $\mc G_{ij}\subseteq \mc Y_1^n$ is the set of all $y_1^n$ sequences for which Decoder~1 sends $M_2=j$ over the cooperation link when $M_1=i$.  Note that, by construction the sets $\{Q_i\}$ are disjoint. Also, define 
	\begin{equation}
	\mc J_n := \bigcup_{i=1}^{\MU} \hspace{-1.5pt}\bigcup_{j=1}^{\MV} \bigcup_{ y_2^n\!\colon ( u^n(i),v^n(j|i), y_2^n) \in \mc T^n_{\delta/2}(P_{UVY_2})} \mc S_{ij}( y_2^n). \nonumber 
	\end{equation}
	%
	Denote by 
	$K(X^{(n)}U^{(n)}V^{(n)}Y^{(n)}_1Y^{(n)}_2)$ the number of all tuples $( x^n,  u^n(i), v^n(j|i),  y^n_1, y^n_2) \in \mc J_n$  that have joint type $X^{(n)}U^{(n)}V^{(n)}Y^{(n)}_1Y^{(n)}_2$. This number  can be bounded as
	\begin{align}
	& K(X^{(n)}U^{(n)}V^{(n)}Y^{(n)}_1Y^{(n)}_2) \nonumber\\
	& \quad \leq \sum_{i=1}^{\MU} \sum_{j=1}^{\MV} \exp\big[nH(X^{(n)}Y^{(n)}_1Y_2^{(n)}|U^{(n)}V^{(n)})\big] \nonumber\\
	& \quad \leq \exp\big[n(H(X^{(n)}Y^{(n)}_1Y_2^{(n)}|U^{(n)}V^{(n)}) \nonumber \\
	& \qquad \qquad  \;  +I(U;X)+ I(V;Y_1|U)+ 2 \xi(\delta))\big]. 
	\end{align}
	Notice also that for a given triple of sequences $(x^n, y_1^n, y_2^n)$ of joint type   $X^{(n)}Y^{(n)}_1Y^{(n)}_2$: \begin{align}
	& \text{Pr}\big[({X}^n,{Y}^n_1,{Y}^n_2)=(x^n,  y_1^n,  y_2^n)\big | \bar{\mc H}\big] \nonumber \\
	& \; = \exp\big[-n(H(X^{(n)}Y^{(n)}_1Y^{(n)}_2)\nonumber \\
	& \qquad \qquad \qquad \quad + D(X^{(n)}Y^{(n)}_1Y^{(n)}_2\|\bar{X}\bar{Y}_1\bar{Y}_2))\big].
	\label{probability-sequences-same-type-under-memoryless-probabilistic-model}
	\end{align} 
	
	Defining \begin{align}
	& k(X^{(n)}U^{(n)}V^{(n)}Y^{(n)}_1Y^{(n)}_2)  \triangleq \nonumber \\ & \quad  H(X^{(n)}Y^{(n)}_1Y^{(n)}_2)  + D(X^{(n)}Y^{(n)}_1Y^{(n)}_2\|\bar{X}\bar{Y}_1\bar{Y}_2)\nonumber \\
	& \quad- H(X^{(n)}Y^{(n)}_1,Y_2^{(n)}|U^{(n)}V^{(n)}Y^{(n)}_2) - I(U;X) \nonumber\\
	&\quad - I(V;Y_1|U), 
	\end{align}
	the error probability under $\mc H = i_2$ at Decoder 2 can then be upper bounded as: 
	\begin{IEEEeqnarray}{rCl}
	\beta_{2,n} &\leq& \hspace{-15pt} \sum_{X^{(n)}U^{(n)}V^{(n)}Y^{(n)}_1Y^{(n)}_2}  \hspace{-15pt} K(X^{(n)}U^{(n)}V^{(n)}Y^{(n)}_1Y^{(n)}_2) \nonumber \\
	&& \quad \quad \times \exp\big[-n(H(X^{(n)}Y^{(n)}_1Y^{(n)}_2)\nonumber \\
	&&  \qquad \qquad \qquad \qquad + D(X^{(n)}Y^{(n)}_1Y^{(n)}_2\|\bar{X}\bar{Y}_1\bar{Y}_2))\big]\nonumber \\
	& \leq&  \sum_{X^{(n)}U^{(n)}V^{(n)}Y^{(n)}_1Y^{(n)}_2} \nonumber\\
	&&  \quad \exp\big[-n(k(X^{(n)}U^{(n)}V^{(n)}Y^{(n)}_1Y^{(n)}_2)- 2\xi(\delta)))\big] \nonumber\\*
	\label{proof-bound-type2-error-Decoder2-step1}
	\end{IEEEeqnarray}
	where the sum ranges over all  joint types $X^{(n)}U^{(n)}V^{(n)}Y^{(n)}_1Y^{(n)}_2 \in \mc P^n(\mc X\times\mc U\times\mc V\times \mc Y_1\times\mc Y_2)$ encountered in $\mc J_n$. Since each of these types satisfies the following three inequalities
	\begin{align}\label{eq:cond1}
	| P_{U^{(n)}X^{(n)}}(u,x) - P_{UX}(u,x)| & \leq \delta/8 \\
	| P_{U^{(n)}V^{(n)}Y^{(n)}_1}(u,v,y_1) - P_{UVY_1}(u,v,y_1)| & \leq \delta/2 \\
	| P_{U^{(n)}V^{(n)}Y^{(n)}_2}(u,v,y_2) - P_{UVY_2}(u,v,y_2)| & \leq \delta \label{eq:cond3}
	\end{align}
	for all $(x,u,v,y_1,y_2) \in \mc X\times\mc U\times\mc V\times\mc Y_1\times\mc Y_2$ and since the number of joint types is upper bounded by $(n+1)^{|\mc U||\mc V||\mc X||\mc Y_1||\mc Y_2|}$, one obtains
	\begin{align}
	&\beta_{2,n} \leq (n+1)^{|\mc U||\mc V||\mc X||\mc Y_1||\mc Y_2|} \nonumber\\
	&\times \max \; 
	\exp\big[-n(k(X^{(n)}U^{(n)}V^{(n)}Y^{(n)}_1Y^{(n)}_2)-2 \xi(\delta))\big], \nonumber 
	\end{align}
	where the maximization is over all types $X^{(n)}U^{(n)}V^{(n)}Y^{(n)}_1Y^{(n)}_2$ satisfying \eqref{eq:cond1}--\eqref{eq:cond3}.
	Taking now the limits $n\to \infty$ and $\delta \to 0$, by  
	the continuity of the entropy and relative entropy  and because $\xi(\delta) \to 0$ as $\delta \to 0$, 
	one obtains that the error exponent of the described scheme satisfies \begin{align}
	&\varliminf_{n \to \infty} \frac{1}{n} \log \beta_2 \nonumber \\
	& \quad \geq  \min  \Big[
	D(\tilde{X}\tilde{Y}_1\tilde{Y}_2\|\bar{X}\bar{Y}_1\bar{Y}_2)  + H(\tilde{X}\tilde{Y}_1\tilde{Y}_2)  -
	\nonumber\\
	&\qquad \quad \qquad  H(\tilde{X}\tilde{Y}_1 \tilde{Y}_2|\tilde{U}\tilde{V}) - I({U};{X}) - I(\tilde{V};\tilde{Y}_1|\tilde{U}) \Big], \nonumber 
	\end{align}
	where the minimization is over all joint types $\tilde{U}\tilde{V}\tilde{X}\tilde{Y}_1\tilde{Y}_2 \in \mc L_2(UV)$.
	%
	Simple algebraic manipulations establish the desired result.
	
	\subsection{Proof of the Converse to Theorem~\ref{theorem-lower-bounds-power-exponents-test-against-independence-positive-rates}}~\label{secV_subsecB}
	Fix $\epsilon_1, \epsilon_2\in(0,1)$. Let encoding functions $\phi_{1,n}$ and $\phi_{2,n}$ and  decision functions $\psi_{1,n}$ and $\psi_{2,n}$ be given that satisfy~\eqref{eq-definition-constant-constraints-typeI-errors}  and ~\eqref{eq-definition-achievable-rates}  with $R_2=0$. Let $\alpha_{1,n}$, $\alpha_{2,n}$, $\beta_{1,n}$, and $\beta_{2,n}$ be the error probabilities corresponding to the chosen functions.
	
	%
	
	For $i\in\{1,2\}$:
	\begin{align}
	D\big( P_{  \hat{\mc H}_i | \mathcal{H}} || P_{ \hat{\mc H}_i | \bar{\mc H} }\big) &=  - h_2\left(\alpha_{i,n} \right) - \left(1 - \alpha_{i,n} \right) \log{\left(\beta_{i,n} \right)} \nonumber \\
	& \quad - \alpha_{i,n} \log{\left(1-\beta_{i,n} \right) }
	\label{proof-converse-typeII-error-independent-side-information-step1}
	\end{align}
	where $h_2\left(p\right)$ denotes the entropy of a Bernouilli-$(p)$ memoryless source. Since $\alpha_{i,n} \leq \epsilon_i$, for each $i\in\{1,2\}$, Inequality~\eqref{proof-converse-typeII-error-independent-side-information-step1}  yields
	\begin{align}
	\theta_{i,n} &\triangleq - \frac{1}{n}\log{\left(\beta_{i,n} \right)} 
	\leq \frac{1}{n}  \frac{1}{1-\epsilon_i} D\big( P_{  \mathcal{\hat{H}}_i | \mathcal{H}} || P_{ \mathcal{\hat{H}}_i | \bar{\mc H} }\big) + \mu_{i,n} \nonumber
	\end{align}
	with  $\mu_{i,n}\triangleq\frac{1}{n}\frac{1}{1-\epsilon_i} h_2\left(\alpha_{i,n}\right)$. Notice that $\mu_{i,n}\rightarrow 0$ as $n \rightarrow \infty$.  \\[0.1ex]
	
	Consider first $\theta_{1,n}$:
	\begin{align}
	\label{eq: Decision I II error}
	\theta_{1,n}  &\leq  \frac{1}{n}  \frac{1}{1-\epsilon_1}  D\big( P_{  \hat{\mc H}_1 | \mathcal{H}} || P_{ \hat{\mc H}_1 | \bar{\mc H} }\big) + \mu_{1,n} \nonumber \\
	&\stackrel{(a)}{\leq}  \frac{1}{n}  \frac{1}{1-\epsilon_1}  D\big( P_{Y^n_1 M_1 | {\mc H}}|| P_{{Y}^n_1{M}_1|{\bar{\mc H}}} \big)  + \mu_{1,n}\nonumber \\
	&\stackrel{(b)}{=}  \frac{1}{n}  \frac{1}{1-\epsilon_1}  I\left(  Y^n_1;M_1  \right)  + \mu_{1,n}\nonumber \\
	&\stackrel{(c)}{=}  \frac{1}{n}  \frac{1}{1-\epsilon_1} \sum_{k=1}^n{H\big(  {Y_1}_k \big) - H\big(  {Y_1}_k  | M_1  {Y_1}^{k-1} \big)}   + \mu_{1,n}\nonumber \\
	&\stackrel{(d)}{\leq}  \frac{1}{n}  \frac{1}{1-\epsilon_1}  \sum_{k=1}^n{H\big(  {Y_1}_k \big) - H\big(  {Y_1}_k  | M_1 {Y_1}^{k-1} {X}^{k-1} \big)} \nonumber \\ 
	& \qquad  + \mu_{1,n}\nonumber \\
	&\stackrel{(e)}{=}  \frac{1}{n}  \frac{1}{1-\epsilon_1}  \sum_{k=1}^n{H\big(  {Y_1}_k \big) - H\big(  {Y_1}_k  |M_1 {X}^{k-1} \big)}  + \mu_{1,n} \nonumber \\
	&\stackrel{(f)}{=}   \frac{1}{n}  \frac{1}{1-\epsilon_1} \sum_{k=1}^n{I\left(  {Y_1}_k  ; U_k \right) }  + \mu_{1,n}\nonumber \\
	&\stackrel{(g)}{=}  \frac{1}{1-\epsilon_1} {I\left(  {Y_1}_Q  ; U_Q | Q\right) }  + \mu_{1,n}\nonumber \\
	&\stackrel{(h)}{=}  \frac{1}{1-\epsilon_1} {I\left(  Y_1(n) ; U(n)    \right) }  + \mu_{1,n} \nonumber 
	\end{align}
	where: $(a)$ follows by the data processing inequality for  relative entropy; $(b)$ holds since $M_1$ and $Y^n_1$ are independent under the alternative hypothesis $\bar{\mc H}$; $(c)$ is due to the chain rule for mutual information; $(d)$ follows since conditioning reduces entropy; $(e)$ is due to the Markov chain ${Y_1}^{k-1} \mkv (M_1, X^{k-1}) \mkv {Y_1}_k $; $(f)$ holds by defining $U_k \triangleq(M_1, {X}^{k-1})$; $(g)$ is obtained by introducing a  random variable $Q$ that is uniform over the set $\left\{ 1,\cdots, n \right\}$ and independent of all previously defined random variables; and $(h)$ holds by defining $U(n) \triangleq(U_Q, Q)$ and $Y_{1}(n) \triangleq Y_{1Q}$.
	
	Similarly, one obtains for $\theta_{2,n}$:
	\begin{align}
	\theta_{2,n} &\stackrel{(i)}{\leq} \frac{1}{n} \frac{1}{1 - \epsilon_2} D\big(  P_{Y^n_2 M_1 M_2 |{\mc H}}||P_{ {Y}^n_2{M}_1{M}_2|\bar{ \mc H}}\big)  + \mu_{2,n} \nonumber \\
	&\stackrel{(j)}{=} \frac{1}{n} \frac{1}{1 - \epsilon_2} \big (I\left(  Y^n_2 ; M_1 M_2 \right)  \nonumber \\
	& \qquad +D(  P_{ M_1 M_2 |{\mc H}}||P_{{M}_1{M}_2|\bar{ \mc H}})  \big ) 
	+ \mu_{2,n}\nonumber \\
	&\stackrel{(k)}{\leq} \frac{1}{n} \frac{1}{1 - \epsilon_2}(I\left(  Y^n_2 ; M_1 \right) + I\left(  Y^n_2 ; M_2 | M_1 \right)) \nonumber \\
	& \hspace{2.7cm}+  D(  P_{Y^n_1 M_1  |{\mc H}}||P_{ {Y}^n_1{M}_1|\bar{ \mc H}})   + \mu_{2,n} \nonumber \\
	&\stackrel{(\ell)}{\leq} \frac{1}{n} \frac{1}{1 - \epsilon_2}\big( I(  Y^n_2; M_1)+  \log\|\phi_{2,n}\|  \nonumber\\
	& \qquad \qquad \qquad \qquad +D(  P_{Y^n_1 M_1  |{\mc H}}||P_{ Y^n_1{M}_1|\bar{ \mc H}})\big)  + \mu_{2,n}\nonumber \\
	&\stackrel{(m)}{=}   \frac{1}{n} \frac{1}{1 - \epsilon_2}\left( I\left(  Y^n_2; M_1 \right) + I\left(  Y_1^n; M_1 \right)  \right)+ \tilde{\mu}_{2,n} \nonumber  \\
	&\stackrel{(o)}{\leq}  \frac{1}{1 - \epsilon_2} \left( I\left(  Y_{2}(n); U(n) \right) +I \left(  Y_{1}(n); U(n) \right) \right)+ \tilde{\mu}_{2,n},
	\end{align}
	where $(i)$ follows by the data processing inequality for relative entropy; $(j)$ holds by the independence of the pair $(M_1,M_2)$ with $Y_2^n$ under the alternative hypothesis $\bar{\mc H}$; $(k)$ by the data processing inequality for relative entropy; $(\ell)$ holds since conditioning reduces entropy; $(o)$ follows by proceeding along the steps $(b)$ to $(h)$ above; and $(m)$ holds by defining $\tilde{\mu}_{2,n}\triangleq \log\|\phi_{2,n}\|/(n(1-\epsilon_2))+ \mu_{2,n}$. Notice that  by the assumption $R_2=0$, the term $1/n\log\|\phi_{2,n}\| \to 0$ as $n \to \infty$. Thus, also $\tilde{\mu}_{2,n} \to 0$ as $n \to \infty$. 
	
	We next lower bound the rate $R_1$:
	\begin{align}
	n R_1 &\geq H\left(M_1\right)  \nonumber \\
	&= H\left(M_1\right) - H\left(M_1|X^n\right) \nonumber \\
	&= I\left(M_1;X^n\right) \nonumber \\
	&= \sum_{k=1}^{n}{I\left(M_1;X_k|X^{k-1}\right)} \nonumber \\
	&= \sum_{k=1}^{n}{I\left(X_k|U_k\right)} \nonumber \\
	&= n I\left(X_Q;U_Q|Q\right) \nonumber \\
	&=  n I\left(U(n) ; X(n)\right)  \nonumber
	\end{align}
	
	For any blocklength $n$, the newly defined random variables $X(n),Y_{1}(n),Y_{2}(n) \sim P_{X Y_1 Y_2}$ and $(U(n) \mkv X(n) \mkv Y_{1}(n)Y_{2}(n))$. Letting now the blocklength $n\to \infty$ and $\delta \to 0$,  the asymptotic exponents 
	\begin{IEEEeqnarray}{rCl}
	\theta_{1}& \triangleq&  \varliminf_{n \to \infty} \theta_{1,n}\\
	\theta_{2}& \triangleq&  \varliminf_{n \to \infty} \theta_{2,n}  
	\end{IEEEeqnarray}
	satisfy
	\begin{align}
	\theta_1 &\leq I\left(U ; Y_1\right)\\
	\theta_2 &\leq I\left(U ; Y_1\right) +  I\left(U ; Y_2\right);
	\end{align}
	for some  $U \in \mathcal{S}\left(R_1\right)$.
	This completes the proof of Theorem~\ref{theorem-lower-bounds-power-exponents-test-against-independence-positive-rates}.


	\bibliography{./references-allerton}{}
	\bibliographystyle{myIEEEtran}
	
\end{document}